\numberwithin{equation}{section}
 \theoremstyle{plain}            % body italics
 \newtheorem{theorem}{Theorem}[section]
 \newtheorem{proposition}[theorem]{Proposition}
 \newtheorem{lemma}[theorem]{Lemma}
 \theoremstyle{definition}       % body roman
 \newtheorem{example}[theorem]{Example}
 \newtheorem{remark}[theorem]{Remark}
\newcommand{\ee}{\mathrm{e}}
\newcommand{\D}{\mathrm{d}}
\newcommand{\N}{\mathbb{N}}
\newcommand{\R}{\mathbb{R}}
\newcommand{\DD}{\mathcal{D}}
\newcommand{\OO}{\mathcal{O}}
\newcommand{\defeq}{\vcentcolon=} %% :=
\begin{document}
\title[Magnetic transport due to a potential obstacle]
{Magnetic transport due to a translationally invariant potential obstacle}
\author[P.\ Exner]{Pavel Exner}
\address{Doppler Institute for Mathematical Physics and Applied Mathematics,  Czech Technical University,
B\v rehov\'a 7, 11519 Prague, Czechia, and Department of Theoretical Physics, NPI, Czech Academy of Sciences, 25068 \v{R}e\v{z}
near Prague, Czechia} \email{exner@ujf.cas.cz}
\urladdr{http://gemma.ujf.cas.cz/~exner/}

\author[D.\ Spitzkopf]{David Spitzkopf}
\address{Department of Theoretical Physics, NPI, Czech Academy of Sciences, 25068 \v{R}e\v{z} near Prague, Czechia, and
Faculty of Mathematics and Physics, Charles University, V Hole\v{s}ovi\v{c}k\'ach 2, 18000 Prague, Czechia}
\email{spitzkopf98@gmail.com}

\maketitle
 %----------------%
\begin{abstract}
We consider a two-dimensional system in which a charged particle is exposed to a homogeneous magnetic field perpendicular to the plane and a potential that is translationally invariant in one dimension. We derive several conditions on such a perturbation under which the Landau levels change into an absolutely continuous spectrum.
\end{abstract}
 %----------------%

%%%%%%%%%%%%%%%%%%%%%%%%%%%%%%%%%%%%%%%
\section{Introduction} %\label{s:intro}
\setcounter{equation}{0}

The topic of this paper is the motion of a two-dimensional charged particle exposed to a homogeneous magnetic field of intensity $B$ perpendicular to the plane. In the absence of perturbations, it is described by the Landau Hamiltonian,
 %----------------%
\begin{equation} \label{LandauH}
    H_{L} \defeq \left( -i \partial_x + A_x\right)^2 + \left( -i \partial_y + A_y\right)^2\mbox{ on } L^2\left(\R^2 \right),
\end{equation}
 %----------------%
with the domain of definition $\mathcal{D}_L = \lbrace \psi \in L^2\left(\R^2 \right):\: \left(H_L \psi \right)(x) \in  L^2\left(\R^2 \right) \rbrace$, where $H_L \psi$ is understood in the sense of distributions. The vector potential can be chosen in different ways; for our purposes it is useful to work with the Landau gauge,
 %----------------%
\begin{equation} \label{LandauG}
    \textbf{A}=\left(0,Bx\right).
\end{equation}
 %----------------%
Magnetic field, in particular, a homogeneous one, has localizing effect both classically and quantum mechanically; it is well known that the spectrum of the operator \eqref{LandauH} is pure point, consisting of infinitely degenerate eigenvalues, the \emph{Landau levels}, $\epsilon_n=B(2n+1),\, n=0,1,2,\dots$. Perturbations of $H_L$ remove this degeneracy, partially or fully, but as long as they are of a localized character, the point nature of the spectrum is preserved \cite{LR15, BEHL20}; to achieve a transport associated with the presence of an absolutely continuous component of the spectrum, the perturbation must be infinitely extended.

The analysis simplifies considerably if the perturbation is \emph{translationally invariant} since with help of the gauge \eqref{LandauG} one is then able to pass to a unitarily equivalent operator in the form of a direct integral the fibers of which are one-dimensional Schr\"odinger operators. A well-known example is the restriction of the Landau Hamiltonian \eqref{LandauH} to a halfplane with Dirichlet \cite{DP99}, Neumann \cite[Sec.~2.4]{Ra17}, or Robin \cite{Gr21} boundary; the halfplane boundary can be replaced by a potential wall \cite{MMP99}. Another solvable model of this type involves perturbation of the operator \eqref{LandauH} in the plane by a singular $\delta$-type potential supported by a line \cite{Gr21}.

Magnetic transport due to translationally invariant perturbations can also be considered beyond such explicit models. The best known example is represented by the \emph{Iwatsuka-type systems} \cite{Iw85} in which the perturbation is magnetic, the vector potential \eqref{LandauG} being replaced by $(0,Bx+a(x))$ with some $a:\R\to\R$. It was conjectured \cite[Sec.~6.5]{CFKS87} that any nontrivial perturbation of this type makes the spectrum purely absolutely continuous. While in general this question is still open, there are various sufficient conditions of the function $a$ under which this happens -- see, e.g., \cite{EK00, MP97, MP18}.

In this paper, we consider another type of perturbation, a translationally invariant regular potential; we ask about the spectrum of the the operator $H_L+V$ with $V(x,y)=v(x)$, where $v:\R\to\R$ is a function the properties will be specified below. This problem attracted some interest in connection with recent investigations of \emph{soft quantum waveguides}. In the absence of a magnetic field, the spectrum of Schr\"odinger operator with a straight potential channel is naturally purely absolutely continuous; it may acquire a discrete component if such a waveguide is locally bent keeping its profile \cite{Ex20, EV24, KKK21}; this discrete spectrum can be destroyed by a \emph{local} magnetic field \cite{BBS24}. In this context, one naturally asks what is the influence of a \emph{homogeneous} magnetic field on a straight potential channel spectrum.

We are going to demonstrate several conditions on the function $v$ under which the perturbation changes the Landau levels into a (fully) absolutely continuous spectrum; in analogy with the Iwatsuka model we \emph{conjecture} that this happens for any nontrivial perturbation profile. Our method is the same as in the above mentioned work, rephrasing the problem as an investigation of a family of one-dimensional Schr\"odinger operators, looking into the dependence of their eigenvalues on the momentum variable referring to the motion in the $y$ direction. Let us also recall recent related result on magnetic transport in three-dimensional systems of layer type \cite{Ex22, EKT18}.

%%%%%%%%%%%%%%%%%%%%%%%%%%%%%%%%%%%%%
\section{The model}
\setcounter{equation}{0}

As indicated, the first step is to pass from the operator $H_V=H_L+V$ on $L^2(\R^2)$ to a unitarily equivalent one by means of the Fourier-Plancherel operator in the $y$ variable
 %----------------%
\begin{equation} \label{dirint}
     \hat{H}_{V} := \mathcal{F}_{y\to p}^{-1} H_V \mathcal{F}_{y \to p} = \int_{\mathbb{R}}^{\oplus} h_v(p) \D p\,;
\end{equation}
 %----------------%
using \eqref{LandauH} and \eqref{LandauG} one can check easily that the fiber operators are
 %----------------%
\begin{equation} \label{eq:fiberTransInvariant}
    h_v(p) \defeq -\partial_{x}^2 + (p + Bx)^2 + v(x),\;\; p\in\R,
\end{equation}
 %----------------%
on $L^2(\R)$; we will always assume that \emph{the perturbation is nontrivial}, $v\ne 0$. The spectral properties of $H_V$ are then determined by those of the operators $h_v(p)$ \cite[Sec.~XIII.16]{RS}. Concerning the profile potential, we assume first that
 % ------------- %
 \begin{enumerate}[(v1)]
 \setlength{\itemsep}{1.5pt}
\item $v_+\in L^2_\mathrm{loc}(\R)$ and $v_-\in (L^2+L^\infty)(\R)$, where we put conventionally $v_\pm(x):=\max(\pm v(x),0)$.  \label{v1}
 \end{enumerate}
 % -------------- %
Under this condition the operator $h_v(p)$ is for any $p\in\R$ essentially self-adjoint on $C_0^\infty(\R)$ \cite[Thm.~X.29]{RS}, and moreover, we have

 % -------------- %
\begin{proposition}
\label{prop:purepoint}
The spectrum of $h_v(p)$ is purely discrete and simple consisting of eigenvalues $\epsilon_n(p),\: n=0,1,2,\dots\,$.
\end{proposition}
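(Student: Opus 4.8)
The plan is to treat $h_v(p)$ as a form perturbation of the harmonic fiber $h_0(p) \defeq -\partial_x^2 + (p+Bx)^2$, whose resolvent is compact because the shift $x \mapsto x - p/B$ makes it unitarily equivalent to the harmonic oscillator $-\partial_x^2 + B^2 x^2$, and then to transfer compactness of the resolvent from $h_0(p)$ to $h_v(p)$ by comparing quadratic forms. For a self-adjoint operator bounded below, purely discrete spectrum is equivalent to compactness of the resolvent, i.e.\ to the compact embedding of the form domain into $L^2(\R)$; the simplicity of the eigenvalues will be handled separately via the limit-point character of the problem at both infinities.

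First I would control the negative part $v_-$. Writing $v_- = f + g$ with $f \in L^2(\R)$ and $g \in L^\infty(\R)$ as permitted by (v\ref{v1}), and using the one-dimensional estimates $\norm{\psi}_\infty^2 \le 2\norm{\psi}_2\norm{\psi'}_2$ and $\norm{\psi}_4^2 \le \norm{\psi}_\infty\norm{\psi}_2$ together with Young's inequality, one obtains for every $\varepsilon>0$ a constant $C_\varepsilon$ with
\begin{equation*}
  \int_\R v_-\,|\psi|^2\,\D x \le \varepsilon \int_\R |\psi'|^2\,\D x + C_\varepsilon\int_\R|\psi|^2\,\D x, \qquad \psi\in C_0^\infty(\R).
\end{equation*}
Thus $v_-$ is infinitesimally form-bounded with respect to $-\partial_x^2$, and hence with respect to $h_0(p)$, so the form $q_v[\psi] = \int_\R\bigl(|\psi'|^2 + (p+Bx)^2|\psi|^2 + v_+|\psi|^2 - v_-|\psi|^2\bigr)\,\D x$ is closed and bounded below and defines the self-adjoint operator $h_v(p)$. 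The same estimate shows that its form norm dominates both the $H^1$-norm and the weighted term $\int_\R(p+Bx)^2|\psi|^2\,\D x$; since $v_+\ge 0$ contributes only nonnegatively, the form domains satisfy $Q(h_v(p)) \subseteq Q(h_0(p))$ with continuous embedding.

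Next I would prove that $Q(h_0(p)) \hookrightarrow L^2(\R)$ is compact. A bounded set in $Q(h_0(p))$ is bounded in $H^1(\R)$, hence precompact in $L^2_{\mathrm{loc}}(\R)$ by Rellich's theorem, while the confining weight $(p+Bx)^2 \to \infty$ as $|x|\to\infty$ yields, for every $M>0$ and all sufficiently large $R$, the uniform tail bound $\int_{|x|>R}|\psi|^2\,\D x \le M^{-1}\int_\R(p+Bx)^2|\psi|^2\,\D x$; the two facts together give precompactness in $L^2(\R)$. Combined with the continuous embedding from the previous step, $Q(h_v(p)) \hookrightarrow L^2(\R)$ is compact, so $h_v(p)$ has compact resolvent and its spectrum is purely discrete, a sequence $\epsilon_0(p) < \epsilon_1(p) < \cdots \to \infty$. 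Simplicity then follows because $(p+Bx)^2 + v(x) \to +\infty$ as $|x|\to\infty$ puts the operator in the limit-point case at both endpoints: two eigenfunctions for the same eigenvalue satisfy one and the same second-order ODE, so their Wronskian is constant, and it vanishes because both decay at infinity, forcing the eigenfunctions to be linearly dependent.

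The main obstacle I anticipate is the low regularity of the perturbation. Since $v_+$ is only assumed to lie in $L^2_{\mathrm{loc}}(\R)$, the potential need not be locally bounded, so all the comparison arguments must be kept strictly at the level of quadratic forms rather than pointwise, and the Wronskian computation must be justified for solutions of the eigenvalue equation that solve it only in the distributional sense; here one uses that $(W-\lambda)\psi\in L^1_{\mathrm{loc}}(\R)$ for $W=(p+Bx)^2+v$ and $\psi\in H^1_{\mathrm{loc}}(\R)\subset C^0(\R)$, which renders $\psi\in C^1(\R)$ and makes the Wronskian argument rigorous despite the singular profile.
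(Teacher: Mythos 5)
Your argument is correct, but it takes a genuinely different route from the paper. The paper invokes Molchanov's discreteness criterion: it shows that $\int_a^{a+\delta}\bigl((p+Bx)^2+v(x)\bigr)\,\D x\to\infty$ as $|a|\to\infty$ for every fixed $\delta>0$, handling the $L^2$ part of $v_-$ by Cauchy--Schwarz together with the absolute continuity of the integral; simplicity is then, exactly as in your proof, read off from the limit-point character at $\pm\infty$. You instead establish compactness of the resolvent directly: infinitesimal form-boundedness of $v_-$ with respect to $-\partial_x^2$ (via the one-dimensional Sobolev estimate) gives a closed, semibounded form whose domain embeds continuously into that of the unperturbed fiber $h_0(p)$, and the confining weight $(p+Bx)^2$ combined with Rellich's theorem yields the compact embedding into $L^2(\R)$. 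The paper's route is shorter once Molchanov's theorem is granted and, being a necessary-and-sufficient one-dimensional criterion, is in principle the sharper tool; yours is more self-contained, stays at the form level throughout (which is in any case forced by the assumption $v_+\in L^2_{\mathrm{loc}}$ only), and generalizes more readily beyond one dimension. One small point worth making explicit: the operator produced by your KLMN construction coincides with the closure of $h_v(p)$ from $C_0^\infty(\R)$ because the latter is essentially self-adjoint under (v1), so there is no ambiguity about which self-adjoint realization your compactness statement refers to.
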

 % -------------- %
\begin{proof}
Was $v_-$ essentially bounded outside a compact, the discrete character of the spectrum would follow from \cite[Thm.~XIII.29]{RS}, however, we do not need this additional hypothesis. Indeed, by (v\ref{v1}) we have $v_-=v_{2,-}+v_{\infty,-}$ with $v_{2,-}\in L^2$, hence for any positive $a,\delta$ we can infer that
 %----------------%
$$ %\begin{equation} \label{}
\int_a^{a+\delta} \!v_{2,-}(x)\,\D x \le \delta^{1/2} \Big( \int_a^{a+\delta} \!v_{2,-}(x)^2\,\D x \Big)^{1/2} \!\le \delta^{1/2} \Big( \int_a^\infty \!v_{2,-}(x)^2\,\D x \Big)^{1/2}
$$ %\end{equation}
 %----------------%
and the right-hand side tends to zero as $a\to\infty$ in view of the absolute continuity of the integral. The same is true for the integral of $v_{2,-}$ over the interval $(-a, -a-\delta)$ with any $\delta>0$ as $a\to\infty$. A simple estimate,
 %----------------%
\begin{align*} %\label{}
\int_a^{a+\delta} & \big((p + Bx)^2 + v(x)\big)\,\D x \\ & \ge \delta(Ba+p)(Ba+p+\delta) -\frac13\delta^3 -\delta\|v_{2,\infty}\|_\infty + \int_a^{a+\delta} v_{2,-}(x)\,\D x,
\end{align*}
 %----------------%
and its counterpart for the integral of the potential over \mbox{$(-a, -a-\delta)$} show that $\int_a^{a+\delta} v(x)\,\D x \to\infty$ holds as $|a|\to\infty$ for any $\delta>0$, hence the spectrum is purely discrete by Molchanov theorem \cite{Mo53}. The simplicity comes for the fact that \eqref{eq:fiberTransInvariant} is limit-point at both $\pm\infty$.
\end{proof}

 % -------------- %
\begin{remark} \label{rem:efs}
If needed we shall indicate the potential writing $\epsilon_n(p,v)$. The corresponding normalized eigenfunctions will be denoted $\phi_n(\cdot;p)$ or $\phi_n(\cdot;p,v)$; without loss of generality we may choose them real-valued. Unless $\epsilon_n(\cdot)$ is constant, they represent states transported along the perturbation; the corresponding probability current density is $p|\phi_n(\cdot;p)|^2$.
\end{remark}
 % -------------- %

By Theorem~XIII.86 of \cite{RS} the spectrum of $\hat{H}_{V}$, and thus also of the original operator $H_V$, will be purely absolutely continuous provided
 % -------------- %
\begin{enumerate}
    \item the family $\lbrace h_v(p):\: p \in \R \rbrace$ is analytic with respect to $p$,
    \item no eigenvalue branch $\epsilon_n(\cdot)$ is constant.
\end{enumerate}
 % -------------- %
The analyticity makes verification of the condition (ii) easy; it is enough to check that it takes two different values. The property is easy to establish, without any additional assumption on $v$.
 % -------------- %
\begin{proposition}
\label{prop:TraInvAnal}
Under assumption (v\ref{v1}) the condition (i) is satisfied.
\end{proposition}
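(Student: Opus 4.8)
The plan is to show that the family extends to an \emph{analytic family of type (A)} in the sense of Kato, which is precisely the notion of analyticity required by Theorem~XIII.86 of \cite{RS}. First I would expand the quadratic term and write, for $z\in\C$,
\[ h_v(z) = h_v(0) + 2Bz\,x + z^2, \]
where $h_v(0) = -\pd_x^2 + B^2x^2 + v(x)$ and $x$ denotes multiplication by the coordinate. Since the perturbation $W(z):=2Bz\,x+z^2$ is a polynomial in $z$ with operator coefficients, for every $\psi$ in a fixed domain the map $z\mapsto h_v(z)\psi$ is manifestly entire; the whole question therefore reduces to showing that $h_v(z)$ has a $z$-independent domain, for which it suffices that the multiplication operator $x$ be relatively bounded with respect to $h_v(0)$ with relative bound zero.

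Establishing this relative bound is the heart of the argument and the only place where assumption (v\ref{v1}) enters. I would compare $h_v(0)$ with the harmonic oscillator $H_{\mathrm{osc}}:=-\pd_x^2+B^2x^2$. The elementary form inequality $B^2\norm{x\psi}^2 \le \langle\psi, H_{\mathrm{osc}}\psi\rangle$ reduces everything to controlling $H_{\mathrm{osc}}$ from above by $h_v(0)$. Writing $H_{\mathrm{osc}} = h_v(0) - v_+ + v_-$ and discarding the favourable nonnegative term $-v_+$, it remains to absorb only the negative part $v_-$. By (v\ref{v1}) we have $v_-\in (L^2+L^\infty)(\R)$, which is infinitesimally form bounded relative to $-\pd_x^2$, hence relative to $H_{\mathrm{osc}}$; a standard rearrangement then yields $\langle\psi,H_{\mathrm{osc}}\psi\rangle \le C\big(\langle\psi,h_v(0)\psi\rangle + \norm{\psi}^2\big)$ on the core $C_0^\infty(\R)$. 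Combined with the Cauchy--Schwarz estimate $\langle\psi,h_v(0)\psi\rangle \le \norm{\psi}\,\norm{h_v(0)\psi}$ (after shifting $h_v(0)$ to be positive, which is legitimate since it is bounded below by Proposition~\ref{prop:purepoint}) this gives $\norm{x\psi}\le a\norm{h_v(0)\psi}+b_a\norm{\psi}$ with $a$ arbitrarily small, as desired.

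With the relative bound in hand, the conclusion is routine: $W(z)$ is $h_v(0)$-bounded with relative bound zero for every $z\in\C$, so by the Kato--Rellich theorem $h_v(z)$ is closed on $\DD(h_v(0))$ for all $z$, the domain is $z$-independent, and the family is entire of type (A); for real $z=p$ it is self-adjoint, which is exactly condition (i).

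I expect the main obstacle to be purely the relative-boundedness step, and specifically the care needed because $v_+$ is only assumed to lie in $L^2_{\mathrm{loc}}(\R)$ and may grow without bound. The observation that removes this difficulty is that a growing $v_+$ only makes $h_v(0)$ larger and so helps rather than hinders the bound $H_{\mathrm{osc}}\lesssim h_v(0)$; one never needs to dominate $v_+$, only the form-small negative part $v_-$. A secondary technical point is to justify that the form inequalities, naturally verified on $C_0^\infty(\R)$, extend to the operator domain $\DD(h_v(0))$, which follows from the essential self-adjointness recorded after (v\ref{v1}).
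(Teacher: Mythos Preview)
Your argument is correct, but it is not the paper's. The paper establishes that $\{h_v(p)\}$ is an analytic family of \emph{type (B)}, working entirely with quadratic forms: writing $h_v(p)=h_v(p_0)+d_p$ in the form sense with $d_p[\psi]=(p^2-p_0^2)\|\psi\|^2+(\psi,2Bx(p-p_0)\psi)$, the authors observe that for any $\varepsilon\in(0,1)$ the pointwise inequality $\varepsilon(p_0-Bx)^2\ge 2Bx(p-p_0)-\delta$ holds for a suitable $\delta$, so the linear-in-$x$ term is form-bounded by the harmonic part of $h_v(p_0)$ with relative bound $<1$. That is the whole proof.

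You instead aim for type (A), proving the stronger statement that multiplication by $x$ is $h_v(0)$-bounded in the operator sense with relative bound zero, via the chain $B^2\|x\psi\|^2\le\langle\psi,H_{\mathrm{osc}}\psi\rangle\lesssim\langle\psi,h_v(0)\psi\rangle+\|\psi\|^2$ followed by Cauchy--Schwarz. This works and yields a $p$-independent operator domain, which the paper's form argument does not give. The trade-off is length: the paper's one-line quadratic estimate avoids your detour through $H_{\mathrm{osc}}$, the absorption of $v_-$, and the passage from form to graph-norm bounds, because for type (B) only a form-bound with relative constant $<1$ is needed and the quadratic term $(p_0+Bx)^2$ already sitting inside $h_v(p_0)$ dominates the linear perturbation directly. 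A minor remark: your claim that type (A) is ``precisely the notion required'' by Theorem~XIII.86 is a slight overstatement; type (B) equally yields an analytic family in the sense of Kato, which is what the theorem needs.
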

 % -------------- %
\begin{proof}
Indeed, for any $p_0\in\R$ we have $h_v(p)=h_v(p_0)+d_p$ in the form sense, where
 %----------------%
\begin{equation} \label{pert}
    d_p[\psi]= (p^2-p_0^2)\|\psi\|^2 + (\psi,2Bx(p-p_0)\psi).
\end{equation}
 %----------------%
The first term on right-hand side of \eqref{pert} is bounded for $p$ in any fixed neighborhood of $p_0$, concerning the second one we note that for any $\varepsilon\in (0,1)$ one can find a $\delta>0$ such that
 %----------------%
$$ %\begin{equation} \label{}
    \varepsilon(p_0-Bx)^2 \ge 2Bx(p-p_0) - \delta,
$$ %\end{equation}
 %----------------%
holds for all $x\in\R$, thus this term is relatively bounded with respect to $h_v(p_0)$ with the bound less than one; this means that $\{h_v(p):\: p \in \R \}$ is an analytic family of type (B) in the sense of Kato.
\end{proof}

%%%%%%%%%%%%%%%%%%%%%%%%%%%%%%%%%%%%%
\section{Some simple properties}
\setcounter{equation}{0}

Our main aim is to find conditions which ensure non-constancy of the functions $\epsilon_n(\cdot)$. A simple one is based on what one could call an extended asymmetry. Assume that
 % ------------- %
 \begin{enumerate}[(v1)]
 \setlength{\itemsep}{1.5pt}
 \setcounter{enumi}{1}
\item finite limits $v_{\pm} \defeq \lim_{x \to \pm \infty} v(x)$ exist.  \label{v2}
 \end{enumerate}
 % -------------- %
Needless to say, if the two limits exist and equal each other, with the spectral continuity in mind one may assume that $v_{\pm}=0$ without loss of generality. Then we have the following claim:
 %----------------%
\begin{proposition}
\label{prop:TraInvUnevenLims}
In addition to assumption (v\ref{v1}) and (v\ref{v2}), suppose that $v_+ \neq v_- $, then no eigenvalue branch $\epsilon_n(p)$ is constant in $p$.
\end{proposition}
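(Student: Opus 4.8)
The plan is to use the analyticity from Proposition~\ref{prop:TraInvAnal}: each branch $\epsilon_n(\cdot)$ is then real-analytic, so it suffices to exhibit two values of $p$ at which it differs. I would obtain these by computing the two asymptotic limits of $\epsilon_n(p)$ and showing
\[
\lim_{p\to+\infty}\epsilon_n(p)=B(2n+1)+v_-,\qquad \lim_{p\to-\infty}\epsilon_n(p)=B(2n+1)+v_+ .
\]
These numbers are distinct precisely because $v_+\neq v_-$, and non-constancy of every branch follows at once. It is therefore enough to establish one of the two limits, the other being obtained by the mirror-image argument.

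The mechanism behind these limits is that the quadratic term in \eqref{eq:fiberTransInvariant} confines the particle near the bottom of the parabola at $x_0:=-p/B$, and as $p\to\pm\infty$ this centre is driven off to $\mp\infty$, where $v$ is close to its limiting value. To exploit this I would conjugate $h_v(p)$ by the unitary translation $(U\psi)(t)=\psi(t+x_0)$, which turns it into
\[
\tilde h(x_0)=-\partial_t^2+B^2t^2+v(t+x_0).
\]
The unperturbed part $-\partial_t^2+B^2t^2$ is the harmonic oscillator with eigenvalues $B(2n+1)$ and normalized Hermite eigenfunctions $\chi_k$ of super-exponential decay, while for $x_0\to-\infty$ (that is, $p\to+\infty$) the shifted potential $v(\cdot+x_0)$ tends pointwise to $v_-$. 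One thus expects $\tilde h(x_0)$ to converge, in a suitable sense, to $-\partial_t^2+B^2t^2+v_-$, whose $n$-th eigenvalue is $B(2n+1)+v_-$.

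For the upper bound I would feed the subspace spanned by $\chi_0,\dots,\chi_n$ into the min-max principle. The associated $(n+1)\times(n+1)$ quadratic-form matrix is $B(2k+1)\delta_{jk}+\langle\chi_j,v(\cdot+x_0)\chi_k\rangle$, and the super-exponential decay of the Hermite functions, which suppresses the region where $v$ has not yet settled near its limit, together with dominated convergence gives $\langle\chi_j,v(\cdot+x_0)\chi_k\rangle\to v_-\delta_{jk}$ as $x_0\to-\infty$. Hence the matrix tends to $\mathrm{diag}\big(B(2k+1)+v_-\big)$, its largest eigenvalue tends to $B(2n+1)+v_-$, and therefore $\limsup_{p\to+\infty}\epsilon_n(p)\le B(2n+1)+v_-$.

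The matching lower bound is the crux and the only genuinely technical step. Choosing $N$ so that $|v-v_-|<\varepsilon$ on $(-\infty,-N)$, the shifted potential satisfies $v(t+x_0)>v_--\varepsilon$ on the half-line $t<a$ with $a:=-N-x_0$, whereas on the bad set $t\ge a$, where $v$ is evaluated on $[-N,\infty)$, one only has the one-sided bound $v(t+x_0)\ge v_--\varepsilon-M-w_2(t+x_0)$, coming from the decomposition of the negative part of $v$ into an $L^2$ piece $w_2$ and an $L^\infty$ piece granted by (v\ref{v1}), with $M$ a constant. The decisive observation is that the bad set $\{t\ge a\}$ recedes to $+\infty$ as $x_0\to-\infty$, precisely where the confining term $B^2t^2$ is large. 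The bounded contribution $-M\mathbf{1}_{t\ge a}$ is then absorbed through $\int_{t\ge a}|\psi|^2\le (Ba)^{-2}\langle\psi,(-\partial_t^2+B^2t^2)\psi\rangle$, a relative bound with vanishing coefficient, while the $L^2$ contribution $-w_2(t+x_0)\mathbf{1}_{t\ge a}$ is controlled by combining the same confinement with the one-dimensional estimate $\sup_{t\ge a}|\psi|^2\le 2\|\psi\|\,\|\psi'\|$ into an $L^4$ bound over $\{t\ge a\}$ whose smallness is driven by $a\to\infty$. Inserting these bounds into the min-max lower bound yields $\liminf_{p\to+\infty}\epsilon_n(p)\ge B(2n+1)+v_--\varepsilon$, and letting $\varepsilon\to0$ gives the first limit; the second follows verbatim with $v_-$ replaced by $v_+$ and $x_0\to+\infty$. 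I expect the simultaneous control of the $L^\infty$ and $L^2$ parts of the negative potential on the receding bad set to be the main obstacle.
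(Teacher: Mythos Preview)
Your overall strategy coincides with the paper's: reduce to analyticity plus the computation of the two asymptotic limits $\lim_{p\to\pm\infty}\epsilon_n(p)=B(2n+1)+v_\mp$, and both of you perform the same unitary shift to the operator $-\partial_t^2+B^2t^2+v(t-p/B)$. The difference lies in how the limits are established. The paper uses Dirichlet--Neumann bracketing: it cuts the line at $x=p/(2B)$, so that on the unbounded half containing the parabolic well the shifted potential is uniformly within $\tfrac12\varepsilon$ of $v_-$ once $p$ is large, and then invokes \cite{CSH02} to argue that the half-line Dirichlet and Neumann spectra both converge to the full-line oscillator spectrum. Your route avoids bracketing and the external reference: an upper bound via min--max on the span of the first $n{+}1$ Hermite functions, and a lower bound by a form inequality that treats the ``bad'' region $\{t\ge a\}$, $a=-N-x_0\to\infty$, as a relatively small perturbation of the oscillator thanks to the confinement $B^2t^2\ge B^2a^2$ there.

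Your argument is correct and more self-contained; two small points are worth tightening. First, ``dominated convergence'' in the upper bound is not quite literal under (v\ref{v1}) alone, since $v_+$ is only $L^2_{\mathrm{loc}}$; you should use (v\ref{v2}) to bound $v$ outside a compact set and then Cauchy--Schwarz on the remaining compact piece, after which the super-exponential decay of $\chi_j\chi_k$ on the receding window finishes the job. Second, in the lower bound the $L^2$ contribution gives a term of size $(Ba)^{-1}\|\psi\|^{1/2}q_{HO}[\psi]^{3/4}$ rather than a straight relative bound; a Young inequality turns this into $\delta\,q_{HO}[\psi]+C_\delta(Ba)^{-4}\|\psi\|^2$, so the relative coefficient can indeed be made $o(1)$. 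With these adjustments your approach yields the same conclusion while bypassing the bracketing and the appeal to \cite{CSH02}.
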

 %----------------%
\begin{proof}
As we have indicated, in view of the analyticity it is enough to show that $\epsilon_n(\cdot)$ can assume different values, which follows from the fact that $\lim_{p \to \pm \infty} \epsilon_n(p) = v_{\pm}+ B(2n+1)$ holds for any $n\in\mathbb{N}_0$. Indeed, by variable shift, the spectrum of operator \eqref{eq:fiberTransInvariant} coincides with that of
 %----------------%
\begin{equation} \label{eq:fiberTransInvariant2}
    \tilde{h}_v(p) \defeq -\partial_{x}^2 + B^2x^2 + v\big(x-\textstyle{\frac{p}{B}}\big)
\end{equation}
 %----------------%
which can be regarded as a perturbation of the harmonic oscillator; just the $\pm$ signs of the asymptotic limits are swaped. Let us consider operators $\tilde{h}_v^{N/D}(p)$ obtained by adding Neumann/Dirichlet condition at the point $x=\frac{p}{2B}$ which allow us to use the bracketing estimate \cite[Sec.~XIII.15]{RS},
 %----------------%
$$ %\begin{equation} \label{}
    \tilde{h}_v^N(p) \le \tilde{h}_v(p) \le \tilde{h}_v^D(p)
$$ %\end{equation}
 %----------------%
The spectrum of each of $\tilde{h}_v^{N/D}(p)$ is the union of the respective operators on $\big(-\infty,\frac{p}{2B}\big)$ and $\big(\frac{p}{2B},\infty\big)$; for the limit $p\to\infty$ only the former are relevant. As in \cite{CSH02}, one can check that they both converge to the spectrum of $\tilde{h}_v(p)$, hence to any $\varepsilon>0$ there is a $p_1>0$ such that
 %----------------%
$$ %\begin{equation} \label{}
    |\epsilon_n(p)- \tilde\epsilon_n^{N/D}(p)| = |\tilde\epsilon_n(p)- \tilde\epsilon_n^{N/D}(p)| < \textstyle{\frac12}\varepsilon
$$ %\end{equation}
 %----------------%
holds for all $p>p_1$. Furthermore, by assumption (v\ref{v2}) there is a $p_2>0$ such that $|v_- -v(x-\frac{p}{B})|<\frac12\varepsilon$ holds on $\big(-\infty,\frac{p}{2B}\big)$ for all $p>p_2$, and consequently
 %----------------%
$$ %\begin{equation} \label{}
    |\tilde\epsilon_n^{N/D}(p) -B(2n+1) - v_-| < \textstyle{\frac12}\varepsilon\,;
$$ %\end{equation}
 %----------------%
putting the two estimates together, we get $|\epsilon_n(p) -B(2n+1) - v_-| < \varepsilon$ for all $p>\max\{p_1,p_2\}$. The analogous estimate can be used for $v_+$ and the limit $p\to-\infty$ which establishes the claim.
\end{proof}

In the absence of such a manifested asymmetry, things are more complicated. If we strengthen assumption (v\ref{v1}), namely
 % ------------- %
 \begin{enumerate}[(v1)]
 \setcounter{enumi}{2}
 \setlength{\itemsep}{1.5pt}
\item $v\in L^2(\R)$,  \label{v3}
 \end{enumerate}
 % -------------- %
one can use first-order perturbation theory to prove the absolute continuity for weak enough potential on a finite energy interval.
 %----------------%
\begin{proposition}
\label{prop:TraInvSmallPert}
Under assumption (v\ref{v3}), to every $n\in\N_0$ there is a $\lambda_n>0$ such that the functions $\epsilon_j(\cdot;\lambda v)$ are non-constant for all $j=0,1, \dots,n$.
\end{proposition}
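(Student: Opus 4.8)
The plan is to treat $\lambda$ as the perturbation parameter and to read off non-constancy of $\epsilon_n(\cdot;\lambda v)$ already from the first-order term of its expansion at $\lambda=0$. First I would note that under (v\ref{v3}) the potential is, in one dimension, infinitesimally form bounded with respect to $-\partial_x^2$: indeed $|(v\psi,\psi)|\le\|v\|_2\|\psi\|_{L^4}^2$, and the Gagliardo--Nirenberg inequality gives $\|\psi\|_{L^4}^2\le\varepsilon\|\psi'\|^2+C_\varepsilon\|\psi\|^2$ for every $\varepsilon>0$. Hence $v$ is also infinitesimally form bounded with respect to $h_0(p):=-\partial_x^2+(p+Bx)^2$, the operator \eqref{eq:fiberTransInvariant} with $v\equiv0$, and $\{h_0(p)+\lambda v:\lambda\in\R\}$ is for each fixed $p$ an analytic family of type (B) in $\lambda$ near the origin. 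Since by Proposition~\ref{prop:purepoint} the eigenvalue $B(2n+1)$ of $h_0(p)$ is simple and isolated, Rellich--Kato theory makes $\lambda\mapsto\epsilon_n(p;\lambda v)$ analytic near $\lambda=0$ with first-order coefficient
$$
\frac{\partial}{\partial\lambda}\epsilon_n(p;\lambda v)\Big|_{\lambda=0}=\big(\phi_n^0(\cdot;p),\,v\,\phi_n^0(\cdot;p)\big)=:f_n(p),
$$
where $\phi_n^0(\cdot;p)$ is the normalized eigenfunction of $h_0(p)$. After the shift $x\mapsto x+p/B$ the operator $h_0(p)$ turns into the harmonic oscillator $-\partial_x^2+B^2x^2$, so $\phi_n^0(x;p)=\psi_n(x+p/B)$ with $\psi_n$ the real $n$-th Hermite function, and thus $f_n(p)=\int_\R v(x)\,\psi_n(x+p/B)^2\,\D x$.

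The crux of the argument is to show that $f_n$ is non-constant; this is the only place where $v\ne0$ and $v\in L^2$ genuinely enter. Writing $g_n:=\psi_n^2\in\mathcal S(\R)$ and $F_n(t):=\int_\R v(x)g_n(x+t)\,\D x$, so that $f_n(p)=F_n(p/B)$, I would observe that $F_n$ is the cross-correlation of the two $L^2$-functions $v$ and $g_n$, hence $F_n\in C_0(\R)$; in particular, were $F_n$ constant it would vanish identically. But its Fourier transform satisfies $\widehat{F_n}(\xi)=\hat v(-\xi)\,\hat g_n(\xi)$, and since $g_n(x)$ equals a polynomial times $\ee^{-Bx^2}$ the transform $\hat g_n$ is entire with $\hat g_n(0)=\int_\R\psi_n^2=1\ne0$, so $\hat g_n$ has only isolated zeros. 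Thus $F_n\equiv0$ forces $\hat v=0$ a.e., i.e.\ $v=0$, contradicting the standing assumption $v\ne0$. Consequently $F_n$, and with it $f_n$, is non-constant, and we may fix momenta $p_1,p_2$ with $f_n(p_1)\ne f_n(p_2)$.

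Finally I would propagate non-constancy from $\lambda=0$ to small $\lambda>0$. For the two fixed values $p_1,p_2$ the $\lambda$-analyticity gives $\epsilon_n(p_i;\lambda v)=B(2n+1)+\lambda f_n(p_i)+O(\lambda^2)$, whence
$$
\epsilon_n(p_1;\lambda v)-\epsilon_n(p_2;\lambda v)=\lambda\big(f_n(p_1)-f_n(p_2)\big)+O(\lambda^2),
$$
which is nonzero for all sufficiently small $\lambda>0$; this yields a threshold $\lambda^{(n)}>0$ below which $\epsilon_n(\cdot;\lambda v)$ is non-constant. Carrying this out for each $j=0,1,\dots,n$ (the spacing of the unperturbed levels $B(2j+1)$ keeps the branches separated and the indexing consistent for small $\lambda$) and setting $\lambda_n:=\min_{0\le j\le n}\lambda^{(j)}$ gives the assertion. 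I expect the main obstacle to be exactly the non-vanishing of the first-order coefficient: everything downstream is soft analytic perturbation theory, whereas excluding $f_n\equiv\text{const}$ is what forces one to use the structure of $v$, and the $C_0$-plus-Fourier argument is what makes it go through. A minor technical point worth noting is that the $O(\lambda^2)$ remainder need only be controlled at the two fixed momenta $p_1,p_2$, so no uniformity in $p$ is required.
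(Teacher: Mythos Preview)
Your proof is correct and follows essentially the same route as the paper: first-order perturbation in $\lambda$, then a convolution/Fourier argument exploiting that $\widehat{\psi_n^2}$ has only isolated zeros to show the first-order coefficient cannot vanish identically. The only cosmetic difference is in how non-constancy is extracted at the end: the paper first proves $\lim_{|p|\to\infty}\epsilon_j(p;\lambda v)=B(2j+1)$ via strong resolvent convergence and then finds one $p$ with $\epsilon_j(p;\lambda v)\ne B(2j+1)$, whereas you bypass that limit by noting $F_n\in C_0(\R)$ and comparing two momenta $p_1,p_2$, which is a slight streamlining.
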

 %----------------%
\begin{proof}
As mentioned above, the eigenvalues in question coincide with those of operator \eqref{eq:fiberTransInvariant2}. First of all, we need again to establish the asymptotic behavior of the functions $\epsilon_n(\cdot)$, namely to check that in the present situation we have $\lim_{p \to \pm \infty} \epsilon_n(p) = B(2n+1)$. Since we do not assume (v\ref{v2}) now, we have to choose an argument different from that of Proposition~\ref{prop:TraInvUnevenLims}. Consider the quadratic form of operator \eqref{eq:fiberTransInvariant2},
 %----------------%
\begin{equation} \label{eq:tildeqform}
    \tilde{q}_v(p)[\psi] = \int_\R |\psi'(x)|^2\D x + \int_\R |Bx\psi(x)|^2 \D x + \int_\R v\big(x-\textstyle{\frac{p}{B}}\big) |\psi(x)|^2 \D x.
\end{equation}
 %----------------%
By the argument used in the proof of Proposition~\ref{prop:TraInvAnal} its domain is the same for all $p\in\R$, and there is common core, e.g., $C_0^\infty(\R)$. In view of the absolute continuity of the integral, the last term on the right-hand side tends to zero as $|p|\to\infty$ for any $\psi\in C_0^\infty(\R)$. By \cite[Thm.~VIII.3.6]{Ka} this implies $\tilde{h}_v(p)\to h_0$, the harmonic oscillator Hamiltonian, in the strong resolvent sense, in particular, the sought convergence of the eigenvalues.

Next we fix a $j$ and replace $v$ by $\lambda v$; denoting then the normalized real-valued oscillator eigenfunctions by $\varphi_j$, we get
 %----------------%
\begin{equation} \label{first_pert}
    \epsilon_j(p;\lambda v) = B(2j+1) + \lambda \int_\R  v\big(x-\textstyle{\frac{p}{B}}\big)\, \varphi_j(x)^2\,\D x + \OO(\lambda^2)
\end{equation}
 %----------------%
Assume that the integral in this first-order perturbation expansion term is zero for all $p\in\R$. Putting $u(z)=v(-z)$, it takes a convolution form and applying to it the Fourier-Plancherel operator we infer that the assumption is equivalent to $\big(\mathcal{F}_{x \to k}u\big)(k) (\widehat{\varphi_j^2})(k)=0$ a.e. in $\R$. The second factor on the left-hand side equals $\hat\varphi_j * \hat\varphi_j$, and from the explicit knowledge of $\varphi_j$ it is not difficult to check that as a function of $p$ it can have isolated zeros only, it is then the first factor which must vanish; using further the injectivity of $\mathcal{F}_{x \to k}$, we get $v(x)=0$ a.e. in $\R$.

This is a contradiction, however, because we assumed that the perturbation is nontrivial. This means that the integral at the right-hand side of \eqref{first_pert} is nonzero for some $p$. For weak coupling the first-order term dominates over the remainder, hence there is a $\lambda(j)>0$ such that $\epsilon_j(p;\lambda v) \ne B(2j+1)$ holds for $\lambda\in(0,\lambda(j))$. To get the claim, it then enough to put $\lambda_n:= \min_{0\le j\le n} \lambda(j)$.
\end{proof}

Before proceeding further, let us mention situations when neither (v\ref{v1}) nor (v\ref{v2}) is satisfied.

\begin{example}
Consider potentials of the form $v(x) = \alpha x - \beta^2x^2$. If $\beta=0$ and $\alpha\ne 0$, the operator describes transport in crossed electric and magnetic fields which is explicitly solvable in view of the identity
 %----------------%
$$ %\begin{equation} \label{}
    (Bx+p)^2+\alpha x = \big(Bx+p+\textstyle{\frac{\alpha}{2B}}\big)^2 + p^2 + \frac{\alpha p}{B} - \frac{\alpha^2}{4B^2}.
$$ %\end{equation}
 %----------------%
Thus we have $\epsilon_n(p)= B(2n+1) + p^2 + \frac{\alpha p}{B} - \frac{\alpha^2}{4B^2}$ so that the spectrum is purely absolutely continuous and covers the interval $\big(B-\frac{\alpha^2}{2B^2}, \infty\big)$.

On the other hand, if $\alpha=0$ one can again use completion to the square but now the coefficient values matters:
 %----------------%
\begin{enumerate}
    \item If $B^2-\beta^2 > 0$ the system exhibits transport along such a potential barrier: the spectrum of $h_v(p)$ is purely discrete consisting of the eigenvalues
 %----------------%
$$ %\begin{equation} \label{}
    \epsilon_n(p)= \sqrt{B^2-\beta^2}\,(2n+1) - \frac{B^2 p^2}{\sqrt{B^2-\beta^2}} +p^2
$$ %\end{equation}
 %----------------%
which means that the spectrum of $H_V$ is purely absolutely continuous and covers the whole real line.
 %----------------%
   \item If $B^2-\beta^2 < 0$ the last claim remains true but there is a significant difference: the fiber operator is in this case
 %----------------%
$$ %\begin{equation} \label{}
    -\partial_{x}^2 - \big(\sqrt{\beta^2-B^2}x - \frac{Bp}{\sqrt{\beta^2-B^2}}\big)^2 + \frac{B^2 p^2}{\sqrt{\beta^2-B^2}} +p^2.
$$ %\end{equation}
 %----------------%
It is still essentially self-adjoint because solutions to the deficiency equation behave in the $|x|\to\infty$ asymptotics as $\frac{1}{\sqrt{x}}\, \ee^{\pm ix^2/2}$ \cite[Thm.~6.2.2]{Ol}, hence they are not $L^2$; the spectrum of the fiber operator is purely absolutely continuous, without eigenfunctions, in other words, there are no states which could be transported along the barrier.
\end{enumerate}
\end{example}

%%%%%%%%%%%%%%%%%%%%%%%%%%%%%%%%%%%%%%%%%
\section{More on the absolute continuity}
\setcounter{equation}{0}

Another way is to use the first-order perturbation theory again, this time taking $p$ as the parameter, in the spirit of Feynman and Hellmann. In this way, one is able to avoid the requirement of a potential weakness, at the expenses of a substantial requirement on the potential regularity. To be specific, we assume that
 % ------------- %
 \begin{enumerate}[(v1)]
 \setcounter{enumi}{3}
 \setlength{\itemsep}{1.5pt}
\item $v\in C^1(\R)$ with $v'\in L^2(\R)$.  \label{v4}
 \end{enumerate}
 % -------------- %

 %----------------%
\begin{proposition}
\label{prop:fh}
Under assumptions (v\ref{v3}) and (v\ref{v4}), no eigenvalue branch $\epsilon_n(\cdot)$ is constant.
\end{proposition}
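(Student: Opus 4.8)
The plan is to compute the derivative $\epsilon_n'(\cdot)$ via the Feynman--Hellmann formula and then to argue that it cannot vanish identically. Proposition~\ref{prop:TraInvAnal} guarantees that $\{h_v(p)\}$ is an analytic family of type (B), so each branch $\epsilon_n(\cdot)$ together with the associated normalized real eigenfunction $\phi_n(\cdot;p)$ from Remark~\ref{rem:efs} depends real-analytically on $p$. Differentiating $h_v(p)\phi_n(\cdot;p)=\epsilon_n(p)\phi_n(\cdot;p)$ in $p$, using $\partial_p h_v(p)=2(p+Bx)$ and the normalization $\|\phi_n(\cdot;p)\|=1$, I would first record
\begin{equation} \label{fh1}
\epsilon_n'(p)=2\int_\R (p+Bx)\,\phi_n(x;p)^2\,\D x .
\end{equation}

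The next step is to bring assumption (v\ref{v4}) into play by trading the weight $p+Bx$ for $v'$. Writing $2(p+Bx)=\frac1B\frac{\D}{\D x}(p+Bx)^2$ and integrating by parts gives $\epsilon_n'(p)=-\frac2B\int_\R (p+Bx)^2\phi_n\phi_n'\,\D x$; eliminating $(p+Bx)^2\phi_n$ through the eigenvalue equation in the form $(p+Bx)^2\phi_n=\phi_n''+(\epsilon_n-v)\phi_n$, the contributions $\int\phi_n'\phi_n''$ and $\int\phi_n\phi_n'$ integrate to zero, while a further integration by parts in the remaining term $\int v\phi_n\phi_n'=-\frac12\int v'\phi_n^2$ produces the derivative of the potential. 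One is thus left with
\begin{equation} \label{fh2}
\epsilon_n'(p)=-\frac1B\int_\R v'(x)\,\phi_n(x;p)^2\,\D x .
\end{equation}
Here (v\ref{v4}) ensures $v'\in L^2$ and, since it forces $v$ to grow at most like $|x|^{1/2}$, it also kills all the boundary terms against the Gaussian-type decay of $\phi_n$; the integral in \eqref{fh2} converges absolutely because $\phi_n^2\in L^2$ as well. Assumption (v\ref{v3}) will in addition fix the value of the branch, since the argument in the proof of Proposition~\ref{prop:TraInvSmallPert} already yields $\epsilon_n(p)\to B(2n+1)$ as $|p|\to\infty$.

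Suppose now, for contradiction, that $\epsilon_n(\cdot)$ is constant. Then $\epsilon_n(p)\equiv B(2n+1)$ and, by \eqref{fh2}, $\int_\R v'(x)\phi_n(x;p)^2\,\D x=0$ for every $p\in\R$. In the shifted picture of \eqref{eq:fiberTransInvariant2} the density $\phi_n(x;p)^2$ is concentrated near $x=-\frac{p}{B}$ and, as $|p|\to\infty$, approaches the corresponding translate of the oscillator density $\varphi_n^2$; since $\widehat{\varphi_n^2}$ is entire with only isolated zeros, orthogonality of $v'$ to the whole family of such densities should force $v'\equiv0$. Combined with (v\ref{v3}) this gives $v\equiv0$, contradicting the standing nontriviality assumption and establishing the claim.

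The main obstacle is exactly this last step. The densities $\phi_n(\cdot;p)^2$ are not genuine translates of a single fixed profile but the true, $p$- and $v$-dependent eigenfunctions, so the clean Fourier/convolution argument available in Proposition~\ref{prop:TraInvSmallPert} cannot be invoked verbatim. I expect the difficulty to be overcome by combining the $|p|\to\infty$ asymptotics of $\phi_n(\cdot;p)$ with the real-analyticity in $p$ furnished by Proposition~\ref{prop:TraInvAnal}, so that a convolution identity obtained in the limit propagates to all $p$; controlling the discrepancy between $\phi_n(\cdot;p)^2$ and the translated oscillator density $\varphi_n^2$ uniformly enough to carry this through is precisely where the regularity hypothesis (v\ref{v4}) should earn its place.
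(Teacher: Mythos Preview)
Your derivation of the Feynman--Hellmann identity \eqref{fh2} is correct and coincides with the paper's formula~\eqref{fh_derivative}. The paper reaches it more directly by working in the shifted frame~\eqref{eq:fiberTransInvariant2}, where the only $p$-dependent term is $v(x-p/B)$, so differentiation in $p$ immediately gives $-\frac{1}{B}\int v'(x-p/B)\,\tilde\phi_n(x,p)^2\,\D x$ without your integration-by-parts detour through the eigenvalue equation; either route is fine, and the use of~(v\ref{v3}) to pin the limiting value $B(2n+1)$ is the same.

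The gap you explicitly flag in your last paragraph is real, and the paper does \emph{not} close it along the lines you anticipate. There is no attempt to control the discrepancy between $\phi_n(\cdot;p)^2$ and a translated oscillator density, nor any appeal to real-analyticity in $p$ to propagate a limiting identity. Instead, the paper invokes a result of Eremenko and Novikov~\cite{EN04}: a function whose Fourier transform vanishes on an interval must oscillate asymptotically with a frequency bounded below by the size of that gap. For each fixed $p$ the density $\tilde\phi_n(\cdot,p)^2$ is nonnegative and, by Sturm--Liouville theory, has only finitely many zeros, so its Fourier transform cannot vanish on any interval. The paper then asserts that this, combined with the ``convolution argument'' of Proposition~\ref{prop:TraInvSmallPert}, forces $v'=0$, hence $v$ constant, hence $v=0$ by~(v\ref{v3}). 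Thus the ingredient missing from your proposal is precisely this spectral-gap/oscillation principle, which substitutes for the explicit knowledge of the oscillator eigenfunctions used in Proposition~\ref{prop:TraInvSmallPert}. Your observation that the $p$-dependence of $\tilde\phi_n$ prevents the expression from being a literal convolution is well taken; the paper's passage through this point is terse.
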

 %----------------%
\begin{proof}
Denoting by $\tilde\phi_n$ the eigenfunctions of the operator \eqref{eq:fiberTransInvariant2} which are just the function $\phi_n$ of Remark~\ref{rem:efs} with the shifted argument, $\tilde{h_v}(p)\tilde\phi_n = \epsilon_n(p)\tilde\phi_n$, we have
 %----------------%
$$ %\begin{equation} \label{}
    \epsilon_n(p)= \int_\R \tilde\phi_n(x,p) \tilde{h_v}(p)\tilde\phi_n(x,p)\, \D x,
$$ %\end{equation}
 %----------------%
and using the fact that $\tilde\phi_n$ is real-valued, we get
 %----------------%
$$ %\begin{equation} \label{}
\epsilon'_n(p) = 2\int_\R \Big(\frac{\partial\tilde\phi_n}{\partial p}\Big)(x,p) \tilde{h_v}(p)\,\tilde\phi_n(x,p)
+ \int_\R \tilde\phi_n(x,p) \frac{\D}{\D p}\tilde{h_v}(p)\,\tilde\phi_n(x,p)\, \D x.
$$ %\end{equation}
 %----------------%
The first term on the right-hand side is $\epsilon_n(p)\,\frac{\D}{\D p} (\tilde\phi_n,\tilde\phi_n)$, and since $\tilde\phi_n$ is normalized by assumption, it vanishes. Furthermore, the only $p$-dependent term in \eqref{eq:fiberTransInvariant2} is the last one, thus we obtain
 %----------------%
\begin{equation} \label{fh_derivative}
\epsilon'_n(p) = -\frac{1}{B}\, \int_\R v'(x-\textstyle{\frac{p}{B}})\, \tilde\phi_n(x,p)^2\,\D x.
\end{equation}
 %----------------%
We can use again the convolution argument for the proof of Proposition~\ref{prop:TraInvSmallPert}. Now we do not know $\tilde\phi_n(\cdot,p)$ explicitly and in order to argue as before we need to ascertain that there is no interval on which the Fourier image of $\tilde\phi_n(\cdot,p)^2$ would vanish. Was it the case, the function must have oscillated asymptotically with a frequency bounded below by the size of the gap \cite{EN04}, however, we know that it is nonnegative and by Sturm-Liouville theory it has a finite number of zeros only. It allows us to conclude that $v'=0$, due to the continuity everywhere, and thus $v$ is a constant function. By assumption (v\ref{v3}) it cannot be nonzero and we arrive at a contradiction. This means that the function $\epsilon'_n(\cdot)$ is nonzero which we set out to prove.
\end{proof}

 % -------------- %
\begin{remark} \label{rem:fh}
This simple trick is attributed to Hellmann (1937) and Feynman (1939) was noted earlier by G\"uttinger and Pauli. In physics literature, not obsessed with assumptions, it is referred to as Feynman-Hellmann theorem, in mathematics this name is often reserved for its finite-dimensional version. \cite[Thm.~1.4.7]{Si}. The fates of the theorem name bearers reflect cruel turns of the 20th century history; while the former had a long and spectacularly dazzling carrier, the latter died at a young age by hand of Stalin's henchmen \cite{PT18}.
\end{remark}
 % -------------- %

Trying to reduce the regularity requirement on the potential $v$ one can employ a simple comparison of eigenvalues:
 %----------------%
\begin{lemma}
\label{lem:minimax}
Let $v_1$ and $v_2$ be two potentials satisfying assumption (v\ref{v1}). If $v_1(x)\le v_2(x)$ holds a.e. in $\R$, then the corresponding eigenvalues satisfy $\epsilon_n(p,v_1)\le \epsilon_n(p,v_2)$ for all $p\in\R$ and $n\in\N_0$. In particular, if the potential $v$ is sign definite, $v=v_+$ or $v=v_-$, we have $\epsilon_n(p,v)\ge B(2n+1)$ and $\epsilon_n(p,v)\le B(2n+1)$, respectively.
\end{lemma}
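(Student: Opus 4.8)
The plan is to prove the monotonicity statement via the min-max principle, which is the natural tool since both fiber operators $h_v(p)$ have purely discrete spectrum by Proposition~\ref{prop:purepoint}. The key observation is that the two operators $h_{v_1}(p)$ and $h_{v_2}(p)$ share the same kinetic and magnetic part $-\partial_x^2 + (p+Bx)^2$ and differ only in the potential term. Since the quadratic forms satisfy
\begin{equation*}
    q_{v_1}(p)[\psi] = q_{v_2}(p)[\psi] + \int_\R \big(v_1(x)-v_2(x)\big)|\psi(x)|^2\,\D x \le q_{v_2}(p)[\psi]
\end{equation*}
whenever $v_1(x)\le v_2(x)$ a.e., the ordering of the forms is immediate on the common form core.

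The main step is then to invoke the min-max characterization of the $n$-th eigenvalue,
\begin{equation*}
    \epsilon_n(p,v_i) = \inf_{\substack{M\subset \mathcal{D}(q_{v_i}(p)) \\ \dim M = n+1}} \;\sup_{\substack{\psi\in M \\ \norm{\psi}=1}} q_{v_i}(p)[\psi],
\end{equation*}
valid by \cite[Thm.~XIII.1]{RS} once the operators are bounded below (which they are, being perturbations of the harmonic oscillator with a form-bounded negative part). First I would check that $v_1$ and $v_2$ satisfying (v\ref{v1}) ensure a common form domain, so the competition in the min-max runs over the same subspaces $M$ for both operators; then the pointwise form inequality above gives $\sup_{\psi\in M} q_{v_1}(p)[\psi]\le \sup_{\psi\in M} q_{v_2}(p)[\psi]$ for every fixed $(n+1)$-dimensional $M$, and taking the infimum over $M$ yields $\epsilon_n(p,v_1)\le \epsilon_n(p,v_2)$ for all $p$ and $n$, as claimed.

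For the second assertion I would simply compare against the unperturbed oscillator. Taking $v_2=v=v_+\ge 0=v_1$ gives $\epsilon_n(p,v)\ge \epsilon_n(p,0)=B(2n+1)$, the last equality because at $v=0$ the operator \eqref{eq:fiberTransInvariant} is unitarily equivalent (by the shift used in \eqref{eq:fiberTransInvariant2}) to the harmonic oscillator with frequency $B$, whose eigenvalues are exactly the Landau levels $B(2n+1)$. Symmetrically, taking $v_1=v=v_-\le 0=v_2$ yields $\epsilon_n(p,v)\le B(2n+1)$.

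I expect the only genuine subtlety to be the bookkeeping of form domains under assumption (v\ref{v1}): one must ensure that the negative parts $v_{1,-},v_{2,-}\in(L^2+L^\infty)(\R)$ are infinitesimally form-bounded relative to $-\partial_x^2+(p+Bx)^2$ so that both forms are closed and bounded below on a common domain, and that this domain is independent of which potential is inserted (as already observed in the proof of Proposition~\ref{prop:TraInvAnal}). Once that is in place the argument is entirely routine, so I would keep the verification brief and lean on the essential self-adjointness and discreteness already established.
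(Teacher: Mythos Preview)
Your proposal is correct and follows exactly the paper's approach: the pointwise inequality $v_1\le v_2$ gives the corresponding inequality of quadratic forms, and the min-max principle then yields the eigenvalue ordering; the sign-definite case is just the comparison with $v=0$. Your write-up is in fact more careful than the paper's one-line proof, as you flag the form-domain issue (handled via the common form core $C_0^\infty(\R)$), which the paper leaves implicit.
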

 %----------------%
\begin{proof}
The potential inequality implies the analogous inequality between the respective quadratic forms \eqref{eq:tildeqform}, hence the claim follows for the minimax principle \cite[Thm~XIII.2]{RS}.
\end{proof}

This allows us, to prove the absolute continuity for sign-definite potentials, however, under an additional assumption:
 % ------------- %
 \begin{enumerate}[(v1)]
 \setcounter{enumi}{4}
 \setlength{\itemsep}{1.5pt}
\item there is an open interval $J\subset\R$ and a nonzero nonnegative function $w\in C^1$ supported in $\bar J$ satisfying $w(x)\le v_+(x)$ or satisfying $w(x)\le v_-(x)$ a.e. \label{v5}
 \end{enumerate}
 % -------------- %

 %----------------%
\begin{proposition}
\label{prop:approx_onesided}
Suppose that potential $v$ satisfying (v\ref{v3}) is sign-definite and (v\ref{v5}) holds for the respective sign, then no eigenvalue branch $\epsilon_n(\cdot,v)$ is constant.
\end{proposition}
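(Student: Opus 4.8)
The plan is to combine the monotonicity of the fiber eigenvalues in the potential, furnished by Lemma~\ref{lem:minimax}, with the non-constancy already proved for smooth compactly supported profiles in Proposition~\ref{prop:fh}. The auxiliary function $w$ supplied by assumption (v\ref{v5}) will serve as a smooth test perturbation squeezed between the Landau level and $v$: being $C^1$ and supported in the compact set $\bar J$, its derivative $w'$ is continuous with compact support and hence lies in $L^2(\R)$, so $w$ satisfies both (v\ref{v3}) and (v\ref{v4}) and Proposition~\ref{prop:fh} applies to it directly.

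First I would record the relevant asymptotics. Since $v$ obeys (v\ref{v3}), the strong-resolvent argument from the proof of Proposition~\ref{prop:TraInvSmallPert}---convergence of $\tilde h_v(p)$ to the harmonic oscillator as $|p|\to\infty$---yields $\lim_{p\to\pm\infty}\epsilon_n(p,v)=B(2n+1)$ for every $n$, and the same limit holds for the test profile.

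I would then treat the nonnegative case $v=v_+\ge 0$, in which (v\ref{v5}) gives $0\le w\le v$ a.e. By Proposition~\ref{prop:fh} the branch $\epsilon_n(\cdot,w)$ is non-constant, while Lemma~\ref{lem:minimax} applied to $0\le w$ gives $\epsilon_n(p,w)\ge B(2n+1)$; since this branch tends to $B(2n+1)$ at $\pm\infty$, it must strictly exceed that value at some $p^*\in\R$. Invoking Lemma~\ref{lem:minimax} once more for $w\le v$, I obtain
$$
    \epsilon_n(p^*,v)\ \ge\ \epsilon_n(p^*,w)\ >\ B(2n+1),
$$
whereas $\epsilon_n(p,v)\to B(2n+1)$ as $|p|\to\infty$; thus $\epsilon_n(\cdot,v)$ takes two distinct values and cannot be constant. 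The nonpositive case $v\le 0$ is the mirror image: here (v\ref{v5}) gives $0\le w\le v_-=-v$, hence $v\le -w\le 0$, and applying Proposition~\ref{prop:fh} to the smooth nonpositive profile $-w$ with all inequalities reversed produces a $p^*$ at which $\epsilon_n(p^*,v)\le\epsilon_n(p^*,-w)<B(2n+1)$, again contradicting constancy in view of the limit.

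Since the whole argument is a sandwiching, I expect no real obstacle. The only points that will need care are the verification that the test profile inherits (v\ref{v3})--(v\ref{v4}) from its smoothness and compact support, and the observation that transfers non-constancy upward: a sign-definite branch converging to $B(2n+1)$ at infinity, once Proposition~\ref{prop:fh} forbids it from being constant, must genuinely depart from that level somewhere, and the minimax comparison then pushes $\epsilon_n(\cdot,v)$ past $B(2n+1)$ at the same momentum.
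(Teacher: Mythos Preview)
Your argument is correct and follows essentially the same route as the paper: sandwich $\epsilon_n(\cdot,v)$ between the Landau level and $\epsilon_n(\cdot,w)$ via Lemma~\ref{lem:minimax}, invoke Proposition~\ref{prop:fh} for the smooth compactly supported $w$, and use the $|p|\to\infty$ asymptotics from Proposition~\ref{prop:TraInvSmallPert} to extract a point where the branch departs from $B(2n+1)$. If anything, your write-up is more explicit than the paper's---you spell out that $w$ (respectively $-w$) satisfies (v\ref{v3})--(v\ref{v4}) and you separately record the limit for $\epsilon_n(\cdot,v)$, which the paper leaves implicit.
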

 %----------------%
\begin{proof}
For the sake of definiteness suppose that $v=v_+$, the other case is analogous. By Lemma~\ref{lem:minimax} we have $\epsilon_n(p,v)\ge \epsilon_n(p,w) \ge B(2n+1)$ for all $p\in\R$ and $n\in\N_0$, however, $\epsilon_n(\cdot,w)$ is nonconstant by Proposition~\ref{prop:fh} and $\lim_{|p|\to\infty} = \epsilon_n(p,w)$ as shown in Proposition~\ref{prop:TraInvSmallPert}, so the claim follows.
\end{proof}

 % -------------- %
\begin{remark} \label{rem:fh}
We cannot get rid of assumption (v\ref{v5}) because in general one cannot minorize a nonnegative $L^2$ function by a nontrivial continuous one. As an example, consider function $v$ equal to $\sqrt{2}$ at points of the Smith-Volterra-Cantor set (the "fat Cantor" set) \cite{SVC} and zero elsewhere in $\R$. The measure of the set is $\frac12$ so such a function is a normalized element of $L^2(\R)$, however, there is no interval on which it would be nonzero.
\end{remark}
 % -------------- %

%%%%%%%%%%%%%%%%%%%%%%%%%%%%%%%%%%%%%%%%%%%%%%%%%%%%%%%%%%%
\section{Properties of the spectrum and numerical examples}
\setcounter{equation}{0}

Let us next list some simple properties of the spectrum assuming that $v$ satisfy one of the above conditions guaranteeing the absolute continuity of the spectrum.
 %----------------%
\begin{proposition}
\label{prop:weakpot}
If \mbox{$\|v\|_\infty<B$}, all spectral gaps remain open. If, in addition, the potential is sign-definite, $v=v_+$ or $v=v_-$, the same is true if $\|v\|_\infty<2B$.
\end{proposition}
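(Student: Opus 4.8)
The plan is to combine the monotonicity of eigenvalues in the potential (Lemma~\ref{lem:minimax}) with the elementary remark that a constant shifts the Landau levels rigidly: for $c\in\R$ the fiber operator $-\partial_x^2+(p+Bx)^2+c$ is a translated harmonic oscillator, so its eigenvalues are $B(2n+1)+c$, independent of $p$. Recall that by the direct-integral decomposition \eqref{dirint} the spectrum of $H_V$ is the closure of $\bigcup_{n\in\N_0}\epsilon_n(\R)$, and that each branch $\epsilon_n(\cdot)$ is continuous by Proposition~\ref{prop:TraInvAnal}, so the $n$-th band $\epsilon_n(\R)$ is an interval. It therefore suffices to locate each band and to check that consecutive bands stay apart.

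For the first claim I would use $-\|v\|_\infty\le v(x)\le \|v\|_\infty$ a.e.\ and apply Lemma~\ref{lem:minimax} with the two constant comparison potentials $\pm\|v\|_\infty$. This gives
\[
B(2n+1)-\|v\|_\infty \le \epsilon_n(p,v)\le B(2n+1)+\|v\|_\infty \qquad\text{for all }p\in\R,\ n\in\N_0,
\]
so the $n$-th band is confined to an interval of width $2\|v\|_\infty$ centered at $B(2n+1)$. Since the $n$-th Landau gap $(B(2n+1),B(2n+3))$ has width $2B$, the top of band $n$ lies below the bottom of band $n+1$—explicitly $B(2n+1)+\|v\|_\infty<B(2n+3)-\|v\|_\infty$—precisely when $2\|v\|_\infty<2B$, i.e.\ $\|v\|_\infty<B$. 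Hence every gap survives.

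For the sign-definite case, say $v=v_+\ge 0$, Lemma~\ref{lem:minimax} compared with the zero potential from below and with the constant $\|v\|_\infty$ from above yields $B(2n+1)\le\epsilon_n(p,v)\le B(2n+1)+\|v\|_\infty$. Now the $n$-th band is pinned from below at $B(2n+1)$ and has width at most $\|v\|_\infty$, so the separation condition $B(2n+1)+\|v\|_\infty<B(2n+3)$ reduces to $\|v\|_\infty<2B$; the case $v=v_-\le 0$ is symmetric, the bands being pinned from above at $B(2n+1)$ instead. I do not expect any genuine obstacle here—the argument is a pure bracketing estimate—the only point worth stating carefully being the identification of $\sigma(H_V)$ with $\overline{\bigcup_{n}\epsilon_n(\R)}$, which is standard for direct integrals and already underlies the preceding propositions.
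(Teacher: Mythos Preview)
Your argument is correct and matches the paper's own proof: both rest on the bound $|\epsilon_n(p)-B(2n+1)|\le\|v\|_\infty$ obtained by comparing $v$ with the constant potentials $\pm\|v\|_\infty$, together with the one-sided refinement from Lemma~\ref{lem:minimax} in the sign-definite case. The paper phrases the first step as an ``elementary perturbation argument'' rather than invoking Lemma~\ref{lem:minimax} explicitly, but this is the same content; your write-up simply spells out the details (band intervals, gap widths, identification of $\sigma(H_V)$) more carefully.
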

 %----------------%
\begin{proof}
By an elementary perturbation argument, $|\epsilon_n(p)-B(2n+1)|\le \|v\|_\infty$ which yields the claim as the Landau level spacing is $2B$; in case of a sign definite potential we use in addition Lemma~\ref{lem:minimax}.
\end{proof}

By \eqref{first_pert} the band width dependence on the coupling constant is dominantly linear in the weak-coupling situation. Beyond this regime this is no longer true, but in the sign-definite case the dependence remains monotonous.
 %----------------%
\begin{proposition}
\label{prop:monoton}
Let $v$ be sign-definite, $v=v_+$ or $v=v_-$, then the  band widths of $H_{\lambda V}$ are strictly monotonous in $\lambda$.
\end{proposition}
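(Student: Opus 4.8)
The plan is to exhibit each spectral band as an interval with one endpoint pinned at the Landau level $B(2n+1)$ and the other moving strictly with the coupling. Assume for definiteness $v=v_+\ge 0$, the case $v=v_-\le 0$ being symmetric. By Lemma~\ref{lem:minimax} one has $\epsilon_n(p,\lambda v)\ge B(2n+1)$ for every $p\in\R$ and $\lambda>0$, while the proof of Proposition~\ref{prop:TraInvSmallPert} gives $\lim_{|p|\to\infty}\epsilon_n(p,\lambda v)=B(2n+1)$. Since $\epsilon_n(\cdot,\lambda v)$ is continuous by the analyticity of Proposition~\ref{prop:TraInvAnal}, its range is an interval with infimum $B(2n+1)$ and supremum $M_n(\lambda):=\sup_{p\in\R}\epsilon_n(p,\lambda v)$, so the $n$-th band, being the closure of this range, equals $[B(2n+1),M_n(\lambda)]$ and has width $W_n(\lambda)=M_n(\lambda)-B(2n+1)$; it therefore suffices to prove that $M_n$ is strictly increasing in $\lambda$.

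First I would differentiate a single branch in $\lambda$. Since $v\in L^2(\R)$ is form-bounded relative to $h_0(p)$ with arbitrarily small bound, $\{h_{\lambda v}(p):\lambda\in\R\}$ is for each $p$ an analytic family of type~(B) in Kato's sense, so each branch $\epsilon_n(p,\cdot)$ is real-analytic, and the Feynman--Hellmann identity used in Proposition~\ref{prop:fh}, now with $\frac{\D}{\D\lambda}h_{\lambda v}(p)=v$ acting as a multiplication operator, yields
\begin{equation*}
\frac{\partial}{\partial\lambda}\,\epsilon_n(p,\lambda v)=\int_\R v(x)\,\phi_n(x;p,\lambda v)^2\,\D x.
\end{equation*}
Because $v\ge 0$ is nontrivial it is strictly positive on a set of positive measure, whereas $\phi_n(\cdot;p,\lambda v)$ is an eigenfunction of a Sturm--Liouville problem and hence vanishes only at finitely many points; the integrand is therefore positive on a set of positive measure and the derivative is strictly positive for all $p$ and $\lambda$. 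Two consequences follow: for each fixed $p$ the map $\lambda\mapsto\epsilon_n(p,\lambda v)$ is strictly increasing, and integrating it from $0$ (where $\epsilon_n(p,0)=B(2n+1)$) shows $\epsilon_n(p,\lambda v)>B(2n+1)$ whenever $\lambda>0$, so the band is nontrivial. Combined with the decay at $\pm\infty$ and continuity, this forces the supremum $M_n(\lambda)$ to be attained at some finite $p^\ast=p^\ast(\lambda,n)$.

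It then remains to upgrade the pointwise monotonicity to strictness of the supremum. For $0<\lambda_1<\lambda_2$ choose a maximizer $p^\ast=p^\ast(\lambda_1,n)$; then
\begin{equation*}
M_n(\lambda_2)\ge\epsilon_n(p^\ast,\lambda_2 v)>\epsilon_n(p^\ast,\lambda_1 v)=M_n(\lambda_1),
\end{equation*}
the strict inequality being the pointwise monotonicity just established, whence $W_n(\lambda_2)>W_n(\lambda_1)$. The case $v=v_-\le 0$ is identical with all inequalities reversed: the derivative is strictly negative, the band becomes $[\,m_n(\lambda),B(2n+1)\,]$ with $m_n(\lambda)=\min_p\epsilon_n(p,\lambda v)$ attained at a finite point, and $B(2n+1)-m_n(\lambda)$ is strictly increasing. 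The single genuine obstacle is precisely this last passage: pointwise strict monotonicity of $\epsilon_n(p,\cdot)$ does not automatically make the band edge strictly monotone, and the displayed argument works only because the extremum is attained at a \emph{finite} momentum $p^\ast$; were it merely approached as $|p|\to\infty$ one could conclude no more than $M_n(\lambda_2)\ge M_n(\lambda_1)$. This is why I would emphasise attainment, which the decay $\epsilon_n(p)\to B(2n+1)$ together with the nontriviality of the band guarantees.
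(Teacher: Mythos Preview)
Your argument is correct and rests on the same Feynman--Hellmann computation the paper uses: both differentiate $\epsilon_n(p,\lambda v)$ in $\lambda$ and observe that the resulting integral $\int v\,\phi_n^2$ is sign-definite because $\phi_n^2>0$ a.e. The paper stops there and declares ``the claim follows,'' whereas you supply the missing bridge from pointwise monotonicity of each branch to strict monotonicity of the band width---identifying the band as $[B(2n+1),M_n(\lambda)]$, arguing that the extremum is attained at a finite $p^\ast$, and then comparing $M_n$ at two coupling values via that maximizer; this extra care is warranted, since (as you yourself remark) a supremum of strictly increasing functions need not be strictly increasing without attainment.
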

 %----------------%
\begin{proof}
We use again a Feynman-Hellmann type argument. Similarly as in \eqref{fh_derivative} we get
 %----------------%
\begin{equation} \label{fh_derivative}
\frac{\D\epsilon_n}{\D\lambda}(p,\lambda v) = \int_\R v(x-\textstyle{\frac{p}{B}})\, \tilde\phi_n(x,p)^2\,\D x,
\end{equation}
 %----------------%
and since $\tilde\phi_n(\cdot,p)^2$ is positive a.e. in $\R$, the claim follows.
\end{proof}

The monotonicity does not exclude the possibility that the gaps remain open. Recall that the results of \cite{Gr21} indicate that this might happen if the potential $v$ is replaced by a $\delta$ interaction, even if the proof is missing.

For a regular $v$, however, the situation may be different; let us look at some examples. To illustrate that, consider a family of potentials approximating the $\delta$-interaction, namely
 %----------------%
\[
v(x) = \lambda \cdot \frac{1}{\pi} \cdot \frac{a}{x^2 + a^2}
\]
 %----------------%
for which the corresponding family of Schr\"odinger operators converges in the norm-resolvent sense as $a\to 0$ to the one the one with $\delta$-interaction of strength $\lambda$ at $x=0\:$ \cite[Sec.~1.5]{AGHH}. Figures 1 and 2 show the dispersion curves in the repulsive and attractive situation, respectively.
 % -------------- %
\begin{figure}[h!]
\centering
    %\label{fig:delta_repuls}
    \includegraphics[width=0.8\textwidth]{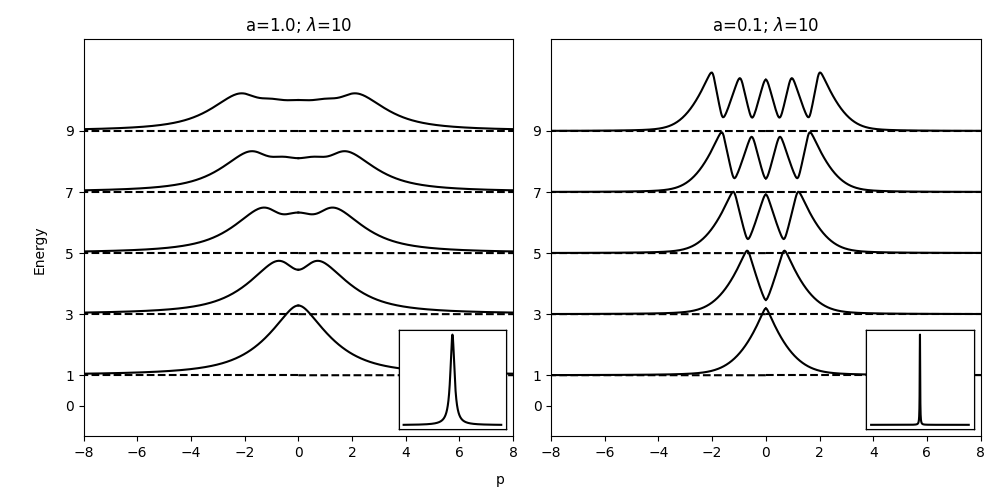}
\caption{Dispersion curves for approximants of a repulsive $\delta$-interaction; the inset shows the potential shape.}
\end{figure}
 % -------------- %
\begin{figure}[h!]
\centering
    %\label{fig:delta_attract}
    \includegraphics[width=0.8\textwidth]{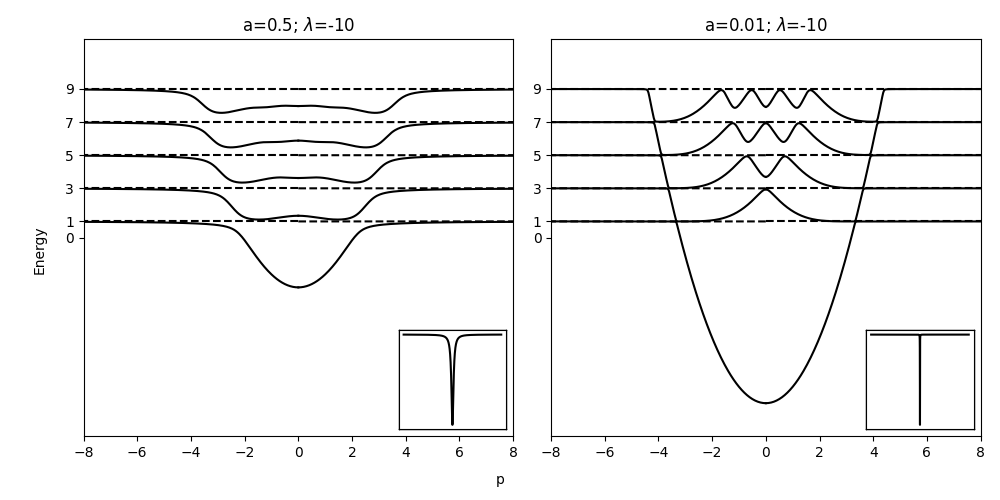}
\caption{Dispersion curves for approximants of an attractive $\delta$-interaction.}
\end{figure}
 % -------------- %
We see that for small values of $a$ the curves resemble those of the $\delta$-interaction, however, gaps which are open for the latter may close for the approximating potentials.
  % -------------- %
\begin{figure}[h!]
\centering
    %\label{fig:flat_bottom}
    \includegraphics[width=0.8\textwidth]{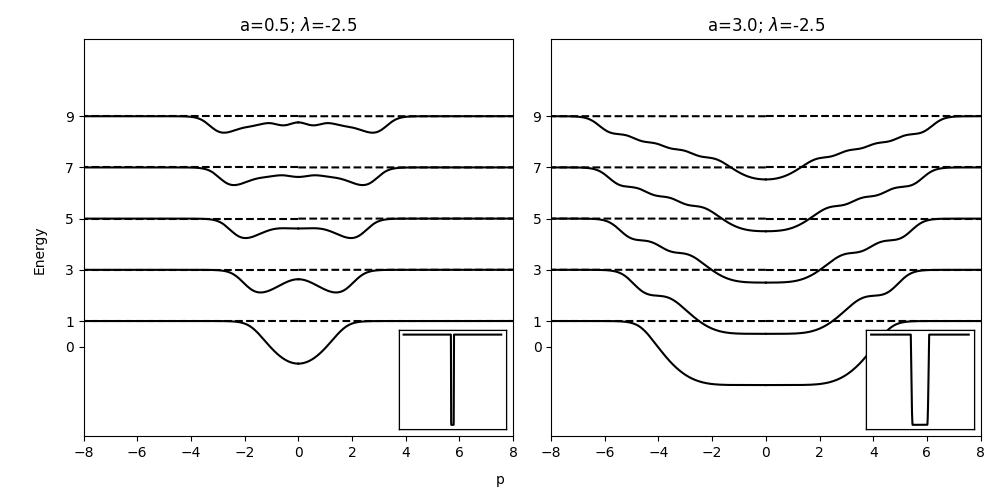}
\caption{Dispersion curves for flat-bottom potential well.}
\end{figure}
 % -------------- %

Let us next consider a potential well with flat bottom. For numerical reasons, we avoid using a rectangular shape; instead, we smooth the well edges by choosing it as
 % -------------- %
\[ v(x; a, b, \lambda) = \begin{cases} \lambda & \text{if } |x| < a \\ \lambda \cdot \cos\left(\frac{\pi}{2} \cdot \frac{|x| - a}{(b - a) }\right) & \text{if } a \leq |x| \leq b \\ 0 & \text{otherwise} \end{cases}
\]
 % -------------- %
The dispersion curves are shown in Figure~3; we see that as the well becomes wide, their central part is close to the appropriately shifted Landau level. Finally, consider a potential without the mirror symmetry, as an example we take
 % -------------- %
\[
v(x; \lambda, a) = \begin{cases} \lambda \cdot \sin\left(\frac{x}{a}\right) & \text{if } |x| < a\pi \\ 0 & \text{otherwise} \end{cases}
\]
 % -------------- %
\begin{figure}[h!]
\centering
    %\label{fig:sign_change}
    \includegraphics[width=0.8\textwidth]{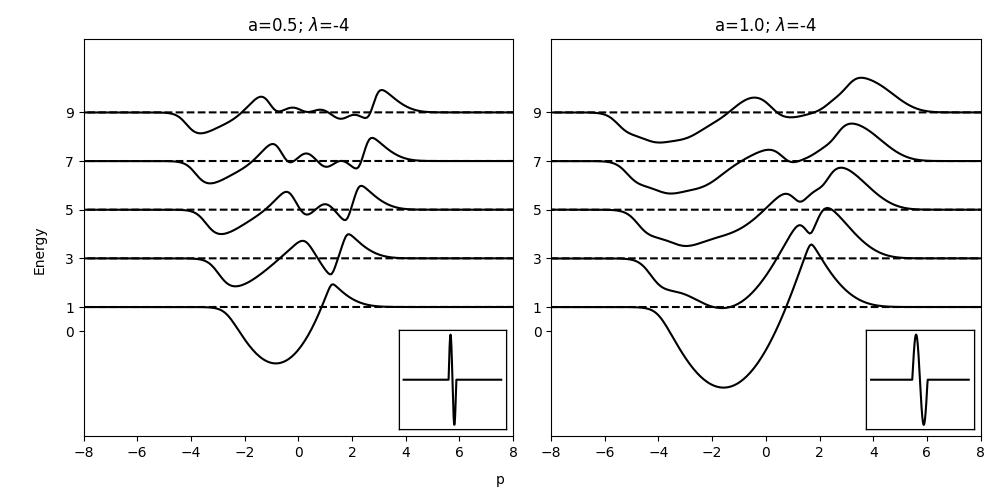}
\caption{Dispersion curves for sign-changing potential.}
\end{figure}
 % -------------- %
The corresponding dispersion curves are shown in Figure~4, as expected, they are also not mirror symmetric.

%%%%%%%%%%%%%%%%%%%%%%%%%%%%%%%%%%%%%%%%%
\subsection*{Data availability statement}

Data are available in the article.

%%%%%%%%%%%%%%%%%%%%%%%%%%%%%%%%%%%%%%%%%
\subsection*{Conflict of interest}

The authors have no conflict of interest.

%%%%%%%%%%%%%%%%%%%%%%%%%%%%%%%%%%%%%%%%%

\subsection*{Acknowledgments}

The work was supported by the European Union's Horizon 2020 research and innovation programme under the Marie Sk{\l}odowska-Curie grant agreement No 873071.

%%%%%%%%%%%%%%%%%%%%%%%%%%%%%%%%%%%%%%%%%%%%%%%%%%%%%%%%%%%%%%%%%%

\end{document}